\newtheorem{definition}{Definition}[section]
\newtheorem{remark}{Remark}[section]
\newtheorem{fact}{Fact}[section]
\newtheorem{proposition}{Proposition}[section]
\newtheorem{theorem}{Theorem}[section]
\newtheorem{corollary}{Corollary}[theorem]
\newtheorem{lemma}{Lemma}[section]
\newcommand{\floor}[1]{\left\lfloor #1 \right\rfloor}
\newcommand{\ceil}[1]{\left\lceil #1 \right\rceil}
\newcommand{\cbr}[1]{\left\{ #1 \right\}}
\newcommand{\rbr}[1]{\left( #1 \right)}
\newcommand{\sbr}[1]{\left[ #1 \right]}
\newcommand{\bitand}{\;\&\;}
\newcommand{\shr}{\;>>\;}
\newcommand{\N}{\mathbb{N}}
\newcommand{\U}[1]{\mathbb{U}_{#1}}
\newcommand{\myemail}[1]{\href{mailto:#1}{\texttt{#1}}}
\providecommand{\keywords}[1]{{\noindent\small\textbf{\textit{Keywords---}} #1}}
\title{Classic Round-Up Variant of Fast Unsigned Division by Constants: Algorithm and Full Proof}
\author{Yifei Li\thanks{Email address: \myemail{liyifei.411@outlook.com}.}}
\date{}
\begin{document}

\maketitle


\begin{abstract}
    Integer division instruction is generally expensive in most architectures.
    If the divisor is constant, the division can be transformed into combinations of several inexpensive integer instructions.
    This article discusses the classic round-up variant of the fast unsigned division by constants algorithm, and provides full proof of its correctness and feasibility.
    Additionally, a simpler variant for bounded dividends is presented.
\end{abstract}

\keywords{division by constants, unsigned integer, round-up, proof}


\section{Introduction}

Processors implement instructions with pipelines to leverage instruction-level parallelism.
Different instructions require different numbers of pipeline stages, and thus take different clock cycles to finish.
Throughout the years, integer division has been one of the most time-consuming instructions in most architectures, for it not only requires many stages, but also cannot be fully pipelined~\cite{granlund_division_1994,hennessey_architecture_2019}.

Fortunately, if the divisor is a known constant, the costly division instruction can be replaced with combinations of inexpensive integer instructions, including addition, multiplication, and bit shifting.
Torbj\"{o}rn Granlund and Peter L. Montgomery proposed the first algorithms for integer division by arbitrary non-zero constants using integer multiplication with a pre-computed magic number~\cite{granlund_division_1994}.
The unsigned version is known as the \emph{round-up} method in the later literature, because it computes the integer magic number by rounding up the exact value of the magic number which is possibly a floating point.
They also discovered that in some cases the magic number they found can be unnecessarily large, and proposed an improved version for such cases.

The book \textit{Hacker's Delight} includes a family of such round-up variants of fast integer division by constants, and provides general algorithms to find the minimum possible magic number~\cite{warren_hacker_2012}.
As a duality of the round-up method, Arch D. Robison proposed the \emph{round-down} method which computes the integer magic number by rounding down the exact value of the magic number~\cite{robison_division_2005}.

When constructing a compiler, the round-down variant and the round-up variant with minimum magic number are preferred, because they spend more time to find the magic number in compile time and execute fewer instructions for division in runtime.
However, the classic round-up variant with potentially larger but easy-to-find magic numbers is still commonly used in other scenarios where finding the magic number itself is part of the overall runtime.
One example is the heterogeneous computation, where the magic number is pre-computed on the host and the division is executed on the device, while both procedures are included in the runtime.
This article focuses on the classic round-up variant, especially the formal proof of this algorithm.

The proof by Granlund and Montgomery is brief and omits some details~\cite{granlund_division_1994}.
Moreover, they mentioned overflow handling, but did not give the condition of overflow.
The proof in \textit{Hacker's Delight} only covers the variant with minimum magic number~\cite{warren_hacker_2012}.
A blog by Peter Ammon informally discusses the round-up variant with more intuition and explanation~\cite{ammon_division_2010}.
It provides many inspiring ideas, but is ambiguous at several critical points.
A more recent blog by Ruben van Nieuwpoort presents formal and rigorous proofs of the round-up and round-down methods~\cite{nieuwpoort_division_2020}.
However, it focuses more on the round-down variant.
As for the round-up variant, it mainly follows from the proof by Granlund and Montgomery, which somehow lacks intuition compared with Ammon's work.

This article first presents the classic round-up variant of the fast unsigned division by constants algorithm, shown in \Cref{alg:pre} and \Cref{alg:runtime}, which does not require finding the minimum possible magic number.
Then it formulates the problem with theorems and propositions in \Cref{sec:theory}, and provides rigorous yet intuitive proofs of the algorithm in \Cref{sec:proof-main,sec:proof-no-overflow}.
In addition, this article discusses an even simpler variant as shown in \Cref{alg:faster-runtime} with fewer instructions in the runtime phase.
As pointed out in the prior art~\cite{granlund_division_1994,ammon_division_2010,nieuwpoort_division_2020}, it is not feasible for a general $N$-bit unsigned division due to arithmetic overflow.
However, this article shows that if the dividend is bounded and strictly less than $2^{N-1}$, this simpler variant will work correctly without overflow.

\section{Problem Statement}

Let a constant $N$-bit unsigned integer $d>0$ be the divisor.
Given any $N$-bit unsigned integer $n$ as dividend, we want to compute the round-down quotient $\floor{\frac{n}{d}}$ as the output, which is also an $N$-bit unsigned integer.


\section{Algorithms}
\label{sec:algo}


\begin{algorithm}[ht]
    \caption{Fast unsigned division: magic number pre-computation\label{alg:pre}}

    \small
    \SetAlgoLined
    \SetKwComment{Comment}{// }{}

    \SetKwInOut{Input}{Input}
    \SetKwInOut{Output}{Output}

    \Input{$N$-bit unsigned integer $d>0$}
    \Output{$N$-bit unsigned magic number $m_{lo}$, and shift offset $p$}

    $p\gets \ceil{\log_2{d}}$\;
    $k\gets N + p$\;
    $m\gets \ceil{2^k / d}$\;
    $m_{lo}\gets m \bitand \rbr{2^{N} - 1}$\;
\end{algorithm}


\begin{algorithm}[ht]
    \caption{Fast unsigned division: runtime\label{alg:runtime}}

    \small
    \SetAlgoLined
    \SetKwComment{Comment}{    // }{}

    \SetKwInOut{Input}{Input}
    \SetKwInOut{Output}{Output}

    \Input{$N$-bit unsigned integer $n$, $N$-bit unsigned magic number $m_{lo}$, and shift offset $p$}
    \Output{$N$-bit unsigned round-down quotient $t=\floor{\frac{n}{d}}$}

    $q\gets \rbr{m_{lo}\cdot n} \shr N$\;
    $h\gets \min(p,\; 1)$ \Comment*[l]{h = 0 iff d = 1}
    \label{alg:runtime:min}
    $t\gets \rbr{n - q} \shr h$\;
    $t\gets t + q$\;
    $t\gets t \shr \rbr{p - h}$\;
\end{algorithm}


Given a constant unsigned divisor $d>0$, pre-compute the $N$-bit unsigned magic number $m_{lo}$ and the shift offset $p$ with \Cref{alg:pre}.
At runtime, for any $N$-bit unsigned dividend $n$, \Cref{alg:runtime} computes the round-down quotient $t$.
Note that $h\gets\min(p, 1)$ handles the case where $d=1$ and the division should return $n$ directly without any bit shifting.
In other cases, $h$ always equals $1$.


\begin{algorithm}[ht]
    \caption{Fast unsigned division: faster runtime for bounded dividend\label{alg:faster-runtime}}

    \small
    \SetAlgoLined
    \SetKwComment{Comment}{// }{}

    \SetKwInOut{Input}{Input}
    \SetKwInOut{Output}{Output}

    \Input{$N$-bit unsigned integer $n<2^{N-1}$, $N$-bit unsigned magic number $m_{lo}$, and shift offset $p$}
    \Output{$N$-bit unsigned round-down quotient $t=\floor{\frac{n}{d}}$}

    $q\gets \rbr{m_{lo}\cdot n} \shr N$\;
    $t\gets \rbr{n + q} \shr p$\;
\end{algorithm}


Furthermore, if it is ensured that $n<2^{N-1}$, i.e., the most significant bit of $n$ is 0, then \Cref{alg:faster-runtime} computes the quotient $t$ with fewer instructions.
However, this algorithm can encounter an arithmetic overflow if the dividend is out of the bound.

In practice, $N$ often equals 32 and the algorithm deals with 32-bit unsigned integers.
Most architectures provide the instruction that returns the high 32-bit word of unsigned multiplication, which is useful for computing $\rbr{m_{lo}\cdot n} >> 32$.


\section{Correctness and Feasibility}
\label{sec:theory}

\begin{definition}
    Given $N\in\N$ such that $N>0$, $u\in\N$ is called an $N$-bit unsigned integer if $0 \le u < 2^{N}$.
    Let $\U{N}$ denote the set of all $N$-bit unsigned integers, i.e., $\U{N} \coloneqq \cbr{0, 1, \dots, 2^N-1}$.
    Let $\U{N}^+ \coloneqq \U{N} \setminus \cbr{0}$, i.e., the non-zero $N$-bit unsigned integers.
    \label{def:uint}
\end{definition}

\begin{definition}[shift offset]
    For divisor $d \in\U{N}^+$, its \emph{shift offset} $p$ is defined as
    \begin{equation}
        p\coloneqq \ceil{\log_2{d}}.
        \label{def:shift}
    \end{equation}
\end{definition}

\begin{definition}[magic number]
    For divisor $d \in\U{N}^+$, its $N$-bit unsigned \emph{magic number} $m_{lo}$ is defined as
    \begin{equation}
        m_{lo}\coloneqq \ceil{\frac{2^{N + p}}{d}} \bitand \rbr{2^{N} - 1},
    \end{equation}
    where $p$ is the shift offset.
    \label{def:magic}
\end{definition}

\begin{fact}
    $\forall u\in\U{N}$, and $M\in\N$ such that $0 \le M \le N$, it satisfies that
    \begin{equation}
        \floor{\frac{u}{2^{M}}} = u \shr M.
        \label{eq:floor-shift}
    \end{equation}
\end{fact}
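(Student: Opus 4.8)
The plan is to reduce the identity to the definition of the logical right shift and to the uniqueness of Euclidean division. I take $u \shr M$ to be the value obtained from the (unique) binary expansion $u = \sum_{i=0}^{N-1} b_i 2^i$, with each $b_i\in\{0,1\}$, by dropping the lowest $M$ bits and moving the rest down, i.e.\ $u \shr M = \sum_{i=M}^{N-1} b_i 2^{i-M} = \sum_{j=0}^{N-1-M} b_{j+M} 2^j$ (with zeros shifted in at the top); if the paper instead defines $u \shr M$ as $\floor{u/2^M}$ truncated to $N$ bits, the argument below also shows that the truncation is vacuous, so it covers both readings.

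First I would split the expansion of $u$ at position $M$, writing
\[
u = \underbrace{\sum_{i=0}^{M-1} b_i 2^i}_{=:\,r} \;+\; 2^M \underbrace{\sum_{i=M}^{N-1} b_i 2^{i-M}}_{=:\,q},
\]
where for $M=0$ the first sum is empty ($r=0$) and for $M=N$ the second is empty ($q=0$). Next I would bound $r \le \sum_{i=0}^{M-1} 2^i = 2^M - 1 < 2^M$ via the geometric series, so that $u = 2^M q + r$ with $0\le r<2^M$; by uniqueness of quotient and remainder this forces $\floor{u/2^M} = q$. Finally $q = \sum_{j=0}^{N-1-M} b_{j+M} 2^j$ is by definition $u\shr M$, and $q < 2^{N-M}\le 2^N$ shows $q\in\U{N}$, so no overflow or truncation intervenes. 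Chaining these equalities yields the claim.

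As a backup I would instead induct on $M$: the base case $M=0$ is trivial, and the step combines $u\shr(M+1) = (u\shr M)\shr 1$, the single-bit identity $v\shr 1 = \floor{v/2}$, and the elementary $\floor{\floor{v/2}/2^M} = \floor{v/2^{M+1}}$.

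There is essentially no obstacle here: the statement is really a change of notation. The only points needing care are invoking uniqueness of the binary representation, the geometric-sum bound $\sum_{i=0}^{M-1}2^i = 2^M-1$ (which is precisely what makes $r$ the genuine remainder), and checking that the shifted value still fits in $N$ bits — this is where the hypothesis $M\le N$ is used, and it is why the boundary case $M=N$, giving $q=0=\floor{u/2^N}$ since $u<2^N$, should be recorded explicitly.
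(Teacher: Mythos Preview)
Your argument is correct: splitting the binary expansion at position $M$, bounding the low part by the geometric sum $2^M-1$, and invoking uniqueness of Euclidean division is exactly the right way to justify the identity, and your check that $q<2^{N-M}\le 2^N$ cleanly handles the fit-in-$N$-bits issue and the boundary case $M=N$. The paper itself offers no proof at all---it records the statement as a \emph{Fact} and moves on---so your write-up supplies precisely the detail the paper elides; there is nothing to compare against, and nothing in your reasoning needs repair.
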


\begin{theorem}[\Cref{alg:faster-runtime} correctness]
    Given $d \in\U{N}^+$ as divisor, and $n\in\U{N}$ as dividend, let
    \begin{equation}
        q\coloneqq \floor{\frac{m_{lo}\cdot n}{2^{N}}},
        \label{eq:q}
    \end{equation}
    where $m_{lo}$ is the magic number.
    It satisfies that
    \begin{equation}
        \floor{\frac{n}{d}}=\floor{\frac{n + q}{2^p}},
        \label{eq:main}
    \end{equation}
    where $p$ is the shift offset.
    \label{thm:main}
\end{theorem}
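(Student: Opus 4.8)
The plan is to peel off the bit-mask in \Cref{def:magic}, reduce \eqref{eq:main} to a single statement about $\floor{mn/2^{N+p}}$ where $m \coloneqq \ceil{2^{N+p}/d}$ is the un-truncated magic number, and then control the rounding error that comes from $m$ not being exactly $2^{N+p}/d$.

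\textbf{Step 1 (removing the mask).} I would first show $2^N \le m \le 2^{N+1}-1$, so that $m_{lo} = m \bitand \rbr{2^N-1} = m - 2^N$. The lower bound is immediate from $d \le 2^p$, which gives $2^{N+p}/d \ge 2^N$. For the upper bound, the key observation is that the choice $p = \ceil{\log_2 d}$ forces $d > 2^{p-1}$; writing $md = 2^{N+p}+e$ with $0 \le e \le d-1$ and using $p \le N$, an elementary estimate ($e < d \le 2^p \le 2^N < 2^{N+1}(d - 2^{p-1})$ when $p \ge 1$, and the trivial case $m = 2^N$ when $d = 1$) yields $m < 2^{N+1}$. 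This is the only place the exact definition of $p$ is used, and I expect it to be the main obstacle: if $p$ were chosen any smaller, $m$ would not fit in $N+1$ bits and the identity $m_{lo} = m - 2^N$ would fail.

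\textbf{Step 2 (rewriting $n+q$ and collapsing the floors).} Using $m_{lo} = m - 2^N$ and $n \in \Z$,
\[
    q = \floor{\frac{\rbr{m - 2^N}\, n}{2^N}} = \floor{\frac{mn}{2^N} - n} = \floor{\frac{mn}{2^N}} - n ,
\]
so $n + q = \floor{mn/2^N}$. Substituting into the right-hand side of \eqref{eq:main} and applying the standard nested-floor identity $\floor{\floor{x/2^N}/2^p} = \floor{x/2^{N+p}}$ (valid for real $x$ and $N,p \in \N$) reduces the theorem to proving
\[
    \floor{\frac{n}{d}} = \floor{\frac{mn}{2^{N+p}}} .
\]

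\textbf{Step 3 (bounding the error term).} With $md = 2^{N+p}+e$, $0 \le e \le d-1$, and $n = d\floor{n/d} + r$, $0 \le r \le d-1$, I would expand
\[
    \frac{mn}{2^{N+p}} = \frac{n}{d} + \frac{ne}{d\, 2^{N+p}} = \floor{\frac{n}{d}} + \frac{1}{d}\rbr{r + \frac{ne}{2^{N+p}}} .
\]
Since $n < 2^N$ and $e \le d - 1 \le 2^p - 1 < 2^p$, we get $ne < 2^{N+p}$, hence $ne/2^{N+p} \in [0,1)$; together with $0 \le r \le d-1$ this puts $r + ne/2^{N+p}$ in $[0,d)$, so the last term above lies in $[0,1)$. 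Taking floors gives $\floor{mn/2^{N+p}} = \floor{n/d}$, which is the desired identity. It is worth sanity-checking the degenerate cases $d = 1$ and $d = 2^p$, where $m_{lo} = 0$ and $q = 0$, and \eqref{eq:main} reads $\floor{n/d} = \floor{n/2^p}$, true because $d = 2^p$.
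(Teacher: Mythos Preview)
Your proposal is correct and follows the same three-part structure as the paper: bound the full magic number $m$ to get $m_{lo}=m-2^{N}$ (the paper's \Cref{lem:33bits}), prove $\floor{n/d}=\floor{mn/2^{N+p}}$ by controlling the error $e=md-2^{N+p}$ (the paper's \Cref{lem:mn_version}), and then link this to $\floor{(n+q)/2^{p}}$. Your execution is actually a little cleaner in two places: in Step~1 you bound $m<2^{N+1}$ directly from $d>2^{p-1}$ instead of going through the paper's auxiliary \Cref{lem:ratio} on $\max_{u}2^{\ceil{\log_2 u}}/u$, and in Step~2 you observe in one line that $n+q=\floor{mn/2^{N}}$ (pulling the integer $n$ out of the floor) and then invoke the nested-floor identity, whereas the paper carries the non-integer quantity $(m_{lo}\cdot n)/2^{N}$ through an extra chain of inequalities to reach the same conclusion.
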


\begin{proposition}
    In \cref{eq:main}, the addition $n + q$ can overflow an $N$-bit register.
    \label{prop:n+q}
\end{proposition}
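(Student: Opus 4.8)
The plan is to prove \Cref{prop:n+q} by exhibiting an explicit family of inputs for which $n+q \ge 2^N$; since the statement only asserts that overflow \emph{can} occur, a single admissible instance (or family) suffices. A natural candidate pairs a small divisor with the largest possible dividend: take $d = 3$, so that $p = \ceil{\log_2 3} = 2$, and $n = 2^N - 1$, with $q = \floor{m_{lo}\, n / 2^N}$ as in \cref{eq:q}. Both are clearly admissible, i.e.\ $d \in \U{N}^+$ and $n \in \U{N}$, for all $N \ge 2$.

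I see two ways to finish. The slicker one invokes \Cref{thm:main}: from $\floor{(n+q)/2^p} = \floor{n/d}$ we immediately obtain the lower bound $n + q \ge 2^p \floor{n/d}$. For $d = 3$ we have $\floor{(2^N-1)/3} \ge (2^N - 2)/3$ (with equality exactly when $N$ is odd, and $\floor{(2^N-1)/3} = (2^N-1)/3$ when $N$ is even, since $2^N \equiv (-1)^N \pmod 3$), hence $n + q \ge 4\cdot(2^N-2)/3 = (2^{N+2} - 8)/3$, which strictly exceeds $2^N$ precisely when $2^N \ge 8$. Thus for every $N \ge 3$ the sum $n+q$ does not fit in an $N$-bit register, proving the claim. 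The self-contained alternative avoids \Cref{thm:main} entirely: since $2^{N+2}/3 = \tfrac{4}{3}2^N \in (2^N, 2^{N+1})$, one computes $m = \ceil{2^{N+2}/3}$, hence $m_{lo} = m - 2^N$, then evaluates $q = \floor{m_{lo}(2^N-1)/2^N}$ and checks $n+q \ge 2^N$ by the same arithmetic. It is worth recording the concrete case $N = 32$, $d = 3$, $n = 2^{32}-1$, where $m_{lo} = (2^{32}+2)/3$, $q = (2^{32}-1)/3$, and $n+q = \tfrac{4}{3}(2^{32}-1) > 2^{32}$, as a sanity check.

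I do not expect a genuine obstacle: the work is bookkeeping rather than insight. The points needing mild care are handling the parity-of-$N$ case split when bounding $\floor{(2^N-1)/3}$, and noting that the clean inequality only certifies overflow for $N \ge 3$ — the cases $N \le 2$ being degenerate (direct computation shows $n+q = 4 = 2^2$ already overflows when $N = 2$, while $N = 1$ forces $d = 1$, $m_{lo} = 0$, $q = 0$, and no overflow is possible). If a uniform statement is preferred, one simply phrases the proposition existentially and points to the $N = 32$ instance above.
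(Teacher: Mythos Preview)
Your proof is correct and takes a genuinely different route from the paper. The paper argues via bit widths: from the identity $n+q = mn \shr N$ (implicit in the rewrite $mn = m_{lo}\,n + 2^N n$) and the fact that $m$ has exactly $N+1$ bits (\Cref{lem:33bits}), it appeals to \Cref{thm:width} to observe that $mn$ can occupy up to $2N+1$ bits, whence $n+q$ can occupy up to $N+1$. This is brief and ties the proposition to the surrounding structural lemmas, but as stated it is heuristic: \Cref{thm:width} guarantees a $(2N+1)$-bit product only for \emph{some} pair in $\U{N+1}\times\U{N}$, not for the particular $m$ determined by a divisor $d$. Your explicit-witness approach ($d=3$, $n=2^N-1$, then bounding $n+q \ge 2^p\floor{n/d}$ via \Cref{thm:main}) is fully rigorous and supplies the concrete instance the paper's argument implicitly relies on; it also gives a checkable numerical example.

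One minor quibble: the phrase ``strictly exceeds $2^N$ precisely when $2^N \ge 8$'' should read ``is at least $2^N$ for $N \ge 3$'' (equality occurs at $N=3$, where the bound gives $n+q \ge 8$), but since $2^N \notin \U{N}$ this already constitutes overflow and your conclusion is unaffected.
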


\begin{proposition}[partial feasibility]
    If $n<2^{N-1}$, RHS of \cref{eq:main} is feasible for $N$-bit registers.
    \label{prop:faster-feasible}
\end{proposition}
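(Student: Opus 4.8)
The plan is to unwind what ``feasible for $N$-bit registers'' means for the right-hand side of \cref{eq:main} as it is actually evaluated in \Cref{alg:faster-runtime}: the high word $q$ of the product $m_{lo}\cdot n$ must lie in $\U{N}$, the sum $n+q$ must lie in $\U{N}$ (this is precisely the quantity that \Cref{prop:n+q} flags as dangerous), and the final right shift must be by an amount in $\cbr{0,\dots,N}$ so that \cref{eq:floor-shift} legitimately identifies $(n+q)\shr p$ with $\floor{(n+q)/2^p}$. I would dispose of the last point first: since $d\in\U{N}^+$ we have $1\le d\le 2^N-1$, hence $p=\ceil{\log_2 d}\le N$, so the shift amount is in range; and the product, being at most $(2^N-1)^2<2^{2N}$, is the usual double-width multiply whose high $N$-bit half is exactly $q$.

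The heart of the argument is then a single inequality. From $m_{lo}\le 2^N-1$ together with the hypothesis $n\le 2^{N-1}-1$ we get $m_{lo}\cdot n\le (2^N-1)(2^{N-1}-1) < 2^N\cdot 2^{N-1}$, so
\[
    0 \le q=\floor{\frac{m_{lo}\cdot n}{2^N}} < 2^{N-1},
\]
and since $q$ is an integer, $q\le 2^{N-1}-1$; in particular $q\in\U{N}$. Adding the hypothesis on $n$ yields $n+q\le 2^N-2<2^N$, so $n+q\in\U{N}$ and no overflow occurs. Consequently \cref{eq:floor-shift} applies with $M=p$, so $(n+q)\shr p$ computes $\floor{(n+q)/2^p}$ exactly, and by \Cref{thm:main} this value equals $\floor{n/d}$, which is itself in $\U{N}$ because $\floor{n/d}\le n<2^N$. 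Hence every intermediate quantity and the output lie in $\U{N}$, which is the assertion.

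I expect no real obstacle here: unlike the overflow discussion surrounding \Cref{prop:n+q}, the bounded case collapses to the one clean estimate $q<2^{N-1}$, whose proof uses only $m_{lo}<2^N$ and $n<2^{N-1}$. The only points that need a little care are (i) deriving $p\le N$ from $d<2^N$ so that the closing shift is a bona fide $N$-bit operation to which the Fact applies, and (ii) being explicit that ``feasibility'' concerns exactly the quantities \Cref{alg:faster-runtime} keeps in $N$-bit registers, with the full up-to-$2N$-bit product delegated to the standard high-multiply primitive.
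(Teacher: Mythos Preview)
Your argument is correct, and in fact cleaner than what the paper does. The paper's proof of this proposition piggybacks on the machinery built for \Cref{prop:n+q}: it observes that $n<2^{N-1}$ means $n$ has at most $N-1$ bits, so the product $mn$ of the $(N{+}1)$-bit full magic number $m$ with $n$ has at most $2N$ bits (via the appendix theorem on product widths), and then invokes the identity $n+q = mn\shr N$ established in the proof of \Cref{prop:n+q} to conclude that $n+q$ fits in $N$ bits. You instead bound $q$ directly from $m_{lo}<2^N$ and $n<2^{N-1}$, getting $q<2^{N-1}$ and hence $n+q<2^N$ by a one-line estimate that never mentions $m$, the bit-width theorem, or the earlier proposition's proof. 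Your route is more self-contained and arguably more transparent; the paper's route is terser only because it cashes in on work already done. You also spell out the side conditions (that $p\le N$ so the final shift is legitimate, and that the output lies in $\U{N}$) which the paper leaves implicit.
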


\begin{corollary}[\Cref{alg:runtime} correctness]
    Given $d \in\U{N}^+$ such that $d\not=1$ as divisor, and $n\in\U{N}$ as dividend, it satisfies that
    \begin{equation}
        \floor{\frac{n}{d}} = \floor{\frac{\floor{\rbr{n - q}/\;2} + q}{2^{p-1}}},
        \label{eq:main-no-overflow}
    \end{equation}
    where $q$ is defined by \eqref{eq:q}, and $p$ is the shift offset.
    \label{thm:no-overflow}
\end{corollary}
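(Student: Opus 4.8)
The plan is to obtain \eqref{eq:main-no-overflow} directly from \Cref{thm:main} by splitting the exponent $p$ in the shift $2^p$ into $1+(p-1)$, which is legitimate precisely because the hypothesis $d\neq 1$ forces $p\ge 1$. Concretely, since $d\in\U{N}^+$ with $d\neq 1$ we have $d\ge 2$, hence $p=\ceil{\log_2 d}\ge 1$ and $p-1\ge 0$ is a well-defined exponent; for $d=1$ (excluded here) one has $p=0$ and \Cref{alg:runtime} instead returns $n$ via $h=\min(p,1)=0$, matching $\floor{n/1}=n$, so excluding this case loses nothing.

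The first key step is the nested-floor identity $\floor{x/(ab)}=\floor{\floor{x/a}/b}$, valid for every real $x$ and all positive integers $a,b$; applying it with $x=n+q$, $a=2$, $b=2^{p-1}$ gives
\[
    \floor{\frac{n+q}{2^p}}=\floor{\frac{\floor{(n+q)/2}}{2^{p-1}}}.
\]
The second key step is to rewrite the inner floor: because $q\in\Z$, we have $(n+q)/2=(n-q)/2+q$, and adding an integer commutes with $\floor{\cdot}$, so $\floor{(n+q)/2}=\floor{(n-q)/2+q}=\floor{(n-q)/2}+q$. Substituting this into the display above and then invoking \Cref{thm:main} to replace $\floor{(n+q)/2^p}$ by $\floor{n/d}$ yields exactly \eqref{eq:main-no-overflow}.

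To justify the name ``\Cref{alg:runtime} correctness'' I would additionally check that every intermediate value computed by \Cref{alg:runtime} stays in $\U{N}$, so that each $\shr$ in the code realises the corresponding $\floor{\cdot/2^j}$ via \cref{eq:floor-shift}. From $m_{lo}<2^N$ we get $m_{lo}\cdot n<2^N n$, hence $q=\floor{m_{lo}n/2^N}\le n$, so $n-q\in\U{N}$ and the subtraction does not underflow; moreover $\floor{(n-q)/2}+q\le (n-q)/2+q=(n+q)/2\le n<2^N$, so the addition $t+q$ does not overflow an $N$-bit register; and $p=\ceil{\log_2 d}\le N$ gives the final shift amount $p-h=p-1\le N$, which is admissible in \cref{eq:floor-shift}. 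This last round of bookkeeping — together with keeping the $d=1$ corner case cleanly out of the way — is the only part I expect to require care; the algebraic identity itself is short once \Cref{thm:main} is in hand.
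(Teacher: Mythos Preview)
Your proposal is correct and follows essentially the same route as the paper: both use $d\neq 1\Rightarrow p\ge 1$, split $2^p=2\cdot 2^{p-1}$, rewrite $(n+q)/2=(n-q)/2+q$, and insert a floor around $(n-q)/2$ before invoking \Cref{thm:main}. Your extra paragraph verifying that each intermediate stays in $\U{N}$ is exactly the content the paper separates out as \Cref{prop:feasible}.
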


\begin{remark}
    It is easy to verify that when $d=1$ \Cref{alg:runtime} is also correct, because the trick in \cref{alg:runtime:min} ensures that no bit shifting is actually performed.
\end{remark}

\begin{proposition}[full feasibility]
    RHS of \cref{eq:main-no-overflow} is feasible for $N$-bit registers.
    \label{prop:feasible}
\end{proposition}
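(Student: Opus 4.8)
The statement asserts that, when $d\ne1$, evaluating the right-hand side of \eqref{eq:main-no-overflow} never leaves the $N$-bit unsigned range: every sub-expression it produces — equivalently, every value assigned to $t$ in the last three lines of \Cref{alg:runtime} with $h=1$, together with $q$ from \eqref{eq:q} — lies in $\U{N}$, and every shift amount satisfies the hypothesis $0\le M\le N$ of \eqref{eq:floor-shift}. My plan is to walk through these sub-expressions in the order \Cref{alg:runtime} computes them — $q$, then $n-q$, then $\floor{(n-q)/2}$, then $\floor{(n-q)/2}+q$, then the final quotient — and to bound each one using only $m_{lo}\le 2^N-1$ (from \Cref{def:magic} and \Cref{def:uint}), $n\le 2^N-1$, and the bounds $1\le p\le N$ on the shift offset, which I will establish first.

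For the shift offset: $d\ne1$ together with $d\ge1$ gives $d\ge2$, hence $p=\ceil{\log_2 d}\ge1$; and $d<2^N$ gives $\log_2 d<N$, hence $p\le N$. Now I would argue step by step. The value $q=\floor{m_{lo}\,n/2^N}$ lies in $\U{N}$ because $m_{lo}\,n<2^{2N}$, and it is exactly the high $N$-bit word of the product $m_{lo}\cdot n$. The subtraction $n-q$ does not underflow: if $n\ge1$ then $q\le m_{lo}\,n/2^N\le(2^N-1)n/2^N<n$, and if $n=0$ then $q=0=n$, so in all cases $0\le n-q\le n<2^N$. Consequently $\floor{(n-q)/2}$ is a nonnegative integer no larger than $n-q<2^N$, and the intervening shift by $h=1$ is legal since $N\ge1$. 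The only step where overflow is genuinely in question is the addition, and here the round-up construction supplies precisely the slack needed, via an elementary averaging estimate:
\[
    \floor{\frac{n-q}{2}}+q \;\le\; \frac{n-q}{2}+q \;=\; \frac{n+q}{2} \;<\; \frac{2^N+2^N}{2} \;=\; 2^N ,
\]
so this sum is again in $\U{N}$. Finally the closing shift is by $p-h=p-1$, applied to a value in $\U{N}$, and $0\le p-1\le N$ by the bounds on $p$; so \eqref{eq:floor-shift} applies at both shifts, these shifts realize the two floors appearing in \eqref{eq:main-no-overflow}, and every stored value is an $N$-bit unsigned integer — which is the assertion.

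I do not expect a real obstacle here; the argument is essentially bookkeeping, and the one point carrying content is the reason the addition is safe in this variant but not in \Cref{alg:faster-runtime}. The contrast with \Cref{prop:faster-feasible} is exactly that the overflow-prone sum $n+q$ (which can approach $2^{N+1}$) has been replaced by $\floor{(n-q)/2}+q\le(n+q)/2<2^N$. The only special case worth checking explicitly is $d=2^p$, where $m_{lo}=0$ and hence $q=0$; but that merely slackens every bound above, so it poses no difficulty.
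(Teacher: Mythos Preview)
Your argument is correct and follows essentially the same route as the paper: the underflow check $q\le n$ from $m_{lo}<2^N$, and the averaging estimate $\floor{(n-q)/2}+q\le(n+q)/2<2^N$ for the addition. You are somewhat more thorough in verifying the shift amounts and the $n=0$ case, but the substance is identical.
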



\section{Proof of Theorem \ref{thm:main}}
\label{sec:proof-main}

\begin{definition}[full magic number]
    For divisor $d \in\U{N}^+$, its \emph{full magic number} $m$ is defined as
    \begin{equation}
        m\coloneqq \ceil{\frac{2^{N + p}}{d}},
    \end{equation}
    where $p$ is the shift offset.
\end{definition}

\begin{remark}
    For divisor $d \in\U{N}^+$, its magic number $m_{lo}$ is the lower $N$ bits of its full magic number $m$.
\end{remark}

\begin{remark}
    For divisor $d \in\U{N}^+$ such that $d$ is a power of 2, we have $m=2^N$ and $m_{lo}=0$.
    \label{rem:d-pow-2}
\end{remark}

\begin{lemma}
    $\forall u \in\U{N}^+$, it satisfies that
    \begin{equation}
        \frac{2^{\ceil{\log_2{u}}}}{u} \le \frac{2^N}{2^{N-1}+1},
    \end{equation}
    and the equality holds iff $u=2^{N-1}+1$.
    \label{lem:ratio}
\end{lemma}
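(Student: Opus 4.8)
The plan is to reduce the inequality to a one-variable monotonicity statement in the exponent $p\coloneqq\ceil{\log_2{u}}$. First I would record what the ceiling gives us: for every $u\in\U{N}^+$ we have $2^{p-1}<u\le 2^p$, and since $u\le 2^N-1<2^N$ this forces $0\le p\le N$, with $p=0$ exactly when $u=1$. So the quantity to bound is $2^p/u$ ranging over this restricted set of pairs $(p,u)$.

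Then I would split on whether $p=0$. If $p=0$ (so $u=1$), the left-hand side equals $1$, and $1\le\frac{2^N}{2^{N-1}+1}$ is immediate, being equivalent to $2^{N-1}\ge 1$; it is strict for $N\ge 2$, so this case never attains equality. If $p\ge 1$, then $2^{p-1}$ is a positive integer, so $u>2^{p-1}$ forces $u\ge 2^{p-1}+1$, whence
\[
  \frac{2^{\ceil{\log_2{u}}}}{u}=\frac{2^p}{u}\le\frac{2^p}{2^{p-1}+1},
\]
with equality iff $u=2^{p-1}+1$; call the right-hand side $g(p)$.

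The crux is that $g$ is strictly increasing on $\cbr{1,2,3,\dots}$. Rewriting $g(p)=\frac{2\cdot 2^{p-1}}{2^{p-1}+1}=2-\frac{2}{2^{p-1}+1}$ makes this transparent: $2^{p-1}$ is strictly increasing in $p$, hence $\frac{2}{2^{p-1}+1}$ is strictly decreasing, hence $g$ is strictly increasing. Since $p\le N$, this gives $g(p)\le g(N)=\frac{2^N}{2^{N-1}+1}$ with equality iff $p=N$. Chaining the two bounds proves the inequality, and equality throughout forces both $p=N$ and $u=2^{p-1}+1$, i.e. $u=2^{N-1}+1$, which indeed lies in $\U{N}^+$ precisely when $N\ge 2$.

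There is no deep obstacle; the content is entirely in the reduction. The one subtle point is the integrality jump $u\ge 2^{p-1}+1$, which is valid only because $2^{p-1}\in\Z$ — this is exactly why the case $p=0$ (where $2^{p-1}=1/2$) must be peeled off first. The other thing worth getting right is the algebraic rewrite $g(p)=2-\frac{2}{2^{p-1}+1}$, which turns what looks like a two-variable maximization into an obvious monotonicity. I would also flag in passing the $N=1$ degeneracy: there $\U{1}^+=\cbr{1}$, the bound equals $1$, and it is attained at $u=1$ rather than at $2^{N-1}+1=2$, so the "iff" should be read under the implicit assumption $N\ge 2$.
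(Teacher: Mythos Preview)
Your proof is correct and follows essentially the same approach as the paper: fix the ceiling value, bound the ratio by its value at the smallest admissible $u$ in that bucket, and then show the resulting one-variable function $2^{p}/(2^{p-1}+1)$ is increasing in $p$. Your presentation is in fact a bit tighter---your rewrite $g(p)=2-\tfrac{2}{2^{p-1}+1}$ makes the monotonicity explicit where the paper merely asserts it, and your $p=0$ vs.\ $p\ge1$ split (with the integrality remark) absorbs the powers-of-two case that the paper handles separately; you also flag the $N=1$ degeneracy that the paper leaves implicit.
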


\begin{proof}
    We know $\rbr{2^{\ceil{\log_2{u}}}/\;u} = 2^{\ceil{\log_2{u}} - \log_2{u}}$.
    When $u$ is a power of 2, $\rbr{2^{\ceil{\log_2{u}}}/\;u}$ takes its minimum value $1$, because $\ceil{\log_2{u}} - \log_2{u} \ge 0$ and the equality holds iff $u$ is a power of 2.

    When $u$ is not a power of 2, suppose $2^{M} < u < 2^{M+1}$ for some $M\in\N$ such that $0<M< N$.
    It is easy to see $\forall u\in\cbr{2^{M} + 1, \dots, 2^{M+1}-1}$, $\ceil{\log_2{u}} = M+1$.
    To maximize $\rbr{2^{\ceil{\log_2{u}}}/\;u}$, $u$ should take the minimum value in this set, i.e., $2^{M} + 1$. Thus,
    \begin{equation}
        \max_{u\in\cbr{2^{M} + 1, \dots, 2^{M+1}-1}}\frac{2^{\ceil{\log_2{u}}}}{u} = \frac{2^{M+1}}{2^{M}+1}.
    \end{equation}

    Further, we can prove that $\forall x\in \mathbb{R}, f(x)=2^{x+1} / \rbr{2^{x}+1}$ increases monotonically.
    So when $M=N-1$, i.e., $u=2^{N-1}+1$, $\rbr{2^{\ceil{\log_2{u}}}/\;u}$ is maximized over $u$, and
    \begin{equation}
        \max_{u \in\U{N}^+}\frac{2^{\ceil{\log_2{u}}}}{u} = \frac{2^N}{2^{N-1}+1}.
    \end{equation}
\end{proof}

\begin{lemma}
    Given $d\in\U{N}^+$ as divisor, its full magic number $m$ is an unsigned integer of \emph{exactly} $N+1$ bits, i.e., $m\in\U{N+1}\setminus\U{N}$.
    \label{lem:33bits}
\end{lemma}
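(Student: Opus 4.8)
The plan is to establish the two-sided bound $2^N \le m < 2^{N+1}$, which is precisely the assertion $m \in \U{N+1}\setminus\U{N}$.

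\textbf{Lower bound.} Since $p = \ceil{\log_2 d} \ge \log_2 d$, we have $2^p \ge d$, hence $2^{N+p}/d = 2^N\cdot(2^p/d) \ge 2^N$. Because the ceiling is non-decreasing and $2^N$ is an integer, $m = \ceil{2^{N+p}/d} \ge 2^N$. (\Cref{rem:d-pow-2} already records that this is tight, attained exactly when $d$ is a power of $2$.)

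\textbf{Upper bound.} This is the step that needs care: the crude estimate $p < \log_2 d + 1$, i.e.\ $2^p/d < 2$, only yields $2^{N+p}/d < 2^{N+1}$, and a priori the ceiling could round this up to exactly $2^{N+1}$. To preclude that, I would invoke \Cref{lem:ratio} with $u = d$, giving the sharp inequality $2^p/d = 2^{\ceil{\log_2 d}}/d \le 2^N/(2^{N-1}+1)$. Multiplying by $2^N$,
\[
  \frac{2^{N+p}}{d} \;\le\; \frac{2^{2N}}{2^{N-1}+1}.
\]
A one-line computation, $(2^{N+1}-1)(2^{N-1}+1) = 2^{2N} + 3\cdot 2^{N-1} - 1 > 2^{2N}$ for all $N \ge 1$, shows $2^{2N}/(2^{N-1}+1) < 2^{N+1}-1$. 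Since $2^{N+1}-1$ is an integer, $m = \ceil{2^{N+p}/d} \le 2^{N+1}-1 < 2^{N+1}$.

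Combining the two bounds gives $2^N \le m \le 2^{N+1}-1$, i.e.\ $m \in \U{N+1}\setminus\U{N}$, as claimed. The only genuinely delicate point is the upper bound: it relies essentially on \Cref{lem:ratio} singling out $d = 2^{N-1}+1$ as the worst case — the unique divisor for which $2^p/d$ comes closest to $2$ — and on verifying that even there $2^{N+p}/d$ stays at or below $2^{N+1}-1$, so the ceiling cannot reach $2^{N+1}$.
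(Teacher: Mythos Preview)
Your proof is correct and follows essentially the same route as the paper: a direct lower bound from $2^p \ge d$, and an upper bound obtained by invoking \Cref{lem:ratio} and then checking that the resulting estimate keeps the ceiling strictly below $2^{N+1}$. The only cosmetic difference is in the final arithmetic: the paper inserts the intermediate inequality $2^N/(2^{N-1}+1) \le (2^N-1)/2^{N-1}$ to land on $2^{N+p}/d \le 2^{N+1}-2$, whereas you work directly with $2^{2N}/(2^{N-1}+1)$ and obtain the slightly weaker (but equally sufficient) bound $m \le 2^{N+1}-1$.
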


\begin{proof}
    On the one hand,
    \begin{equation}
        m = \ceil{\frac{2^{N + p}}{d}}
        \ge {\frac{2^{N + p}}{d}}
        = {\frac{2^{N + \ceil{\log_2{d}}}}{2^{\log_2{d}}}}
        \ge {\frac{2^{N + \ceil{\log_2{d}}}}{2^{\ceil{\log_2{d}}}}}
            = 2^{N}.
    \end{equation}
    On the other hand, by \Cref{lem:ratio} we have
    \begin{equation}
        \frac{2^p}{d} = \frac{2^{\ceil{\log_2{d}}}}{d} \le \frac{2^{N}}{2^{N-1}+1} \le \frac{2^{N} - 1}{2^{N-1}}.
    \end{equation}
    Multiplying both sides by $2^N$ we get
    \begin{equation}
        \frac{2^{N + p}}{d} \le 2^{N+1} - 2 .
    \end{equation}
    Rounding up both sides we have
    \begin{equation}
        \ceil{\frac{2^{N + p}}{d}} \le 2^{N+1} - 2 < 2^{N+1}.
    \end{equation}
    So $2^{N} \le m < 2^{N+1}$.
    By definition, $m\in\U{N+1}\setminus\U{N}$, i.e., $m$ takes exactly $N+1$ bits.
\end{proof}

\begin{remark}
    By \Cref{lem:33bits}, it is easy to see that
    \begin{equation}
        m_{lo} = m - 2^{N}.
        \label{eq:m_lo}
    \end{equation}
\end{remark}

\begin{lemma}
    Given $d \in\U{N}^+$ as divisor, and $n\in\U{N}$ as dividend, it satisfies that
    \begin{equation}
        \floor{\frac{n}{d}}=\floor{\frac{mn}{2^{N+p}}},
        \label{eq:lemma5_2}
    \end{equation}
    where $p$ is the shift offset.
    \label{lem:mn_version}
\end{lemma}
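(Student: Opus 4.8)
The plan is to show that the real number $\frac{mn}{2^{N+p}}$ lies in the half-open interval $\left[\floor{\frac{n}{d}},\, \floor{\frac{n}{d}}+1\right)$, which immediately gives $\floor{\frac{mn}{2^{N+p}}} = \floor{\frac{n}{d}}$. Write $m = \frac{2^{N+p}}{d} + \varepsilon$ where $\varepsilon \coloneqq \ceil{\frac{2^{N+p}}{d}} - \frac{2^{N+p}}{d} \in [0,1)$ is the rounding-up error. Then
\begin{equation}
    \frac{mn}{2^{N+p}} = \frac{n}{d} + \frac{\varepsilon n}{2^{N+p}}.
\end{equation}
Since $\frac{n}{d} \ge \floor{\frac{n}{d}}$ and $\varepsilon n \ge 0$, the lower bound $\frac{mn}{2^{N+p}} \ge \floor{\frac{n}{d}}$ is immediate. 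The work is in the upper bound.

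For the upper bound, first I would handle the trivial case $n=0$ separately (both sides are $0$). For $n \ge 1$, I would bound the error term: $\varepsilon < 1$ and $n \le 2^N - 1 < 2^N$, so $\frac{\varepsilon n}{2^{N+p}} < \frac{2^N}{2^{N+p}} = 2^{-p} \le 1$ when $d \ge 2$ (so $p \ge 1$); the case $d=1$ is again trivial since then $m = 2^N$, $p=0$, and $\frac{mn}{2^{N+p}} = n = \floor{\frac{n}{d}}$. But $\frac{\varepsilon n}{2^{N+p}} < 1$ alone is not enough — I need $\frac{n}{d} + \frac{\varepsilon n}{2^{N+p}} < \floor{\frac{n}{d}} + 1$. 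Write $n = d\floor{\frac{n}{d}} + r$ with $0 \le r \le d-1$, so $\frac{n}{d} = \floor{\frac{n}{d}} + \frac{r}{d}$; it suffices to show $\frac{r}{d} + \frac{\varepsilon n}{2^{N+p}} < 1$, i.e., $\frac{\varepsilon n}{2^{N+p}} < \frac{d-r}{d} = 1 - \frac{r}{d}$. Since $d - r \ge 1$, it suffices to prove $\frac{\varepsilon n}{2^{N+p}} < \frac{1}{d}$, i.e., $\varepsilon n d < 2^{N+p}$.

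The main obstacle is establishing $\varepsilon n d < 2^{N+p}$. The key is that $\varepsilon$ is not merely bounded by $1$ but is in fact small relative to $d$: since $d \varepsilon = d\ceil{\frac{2^{N+p}}{d}} - 2^{N+p}$ is a nonnegative integer strictly less than $d$ (it is the residue $(-2^{N+p}) \bmod d$), we have $d\varepsilon \le d - 1$. Actually more is true — because $2^{N+p} = 2^N \cdot 2^p$ and $2^p \ge d$ (as $p = \ceil{\log_2 d}$), one can show $d \varepsilon \le d-1$ and in fact the exact relation $d\varepsilon = (d - (2^{N+p} \bmod d)) \bmod d$. Combining $d\varepsilon \le d-1$ with $n \le 2^N - 1$ gives $\varepsilon n d \le (d-1)(2^N-1) < d \cdot 2^N \le 2^p \cdot 2^N = 2^{N+p}$, which is exactly what we need. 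I would also double-check the edge where $d$ is a power of $2$ (then $\varepsilon = 0$ by Remark \ref{rem:d-pow-2}, so the bound is trivial) to make sure the monotonicity/ceiling manipulations are valid throughout. Assembling these pieces — $n=0$ and $d=1$ as base cases, and $\varepsilon n d < 2^{N+p}$ via $d\varepsilon \le d-1$ for the generic case — completes the proof.
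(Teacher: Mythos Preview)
Your proof is correct and follows essentially the same route as the paper's: your rounding error $\varepsilon$ is exactly $e/d$ in the paper's notation (where $e$ is the integer with $m=(2^{N+p}+e)/d$ and $0\le e<d$), and your key inequality $\varepsilon n d<2^{N+p}$ is just the paper's bound $\tfrac{e}{d}\cdot\tfrac{n}{2^{N+p}}<\tfrac{1}{d}$ rearranged, proved via the same two ingredients $e\le d-1$ and $d\le 2^{p}$. The paper does not single out $n=0$ or $d=1$ since the general estimate already covers them, but your explicit case split is harmless.
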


\begin{proof}
    For simplicity, let $k \coloneqq N+p$.
    Then we have
    \begin{equation}
        m = \ceil{\frac{2^{k}}{d}} = \frac{2^{k} + e}{d},
    \end{equation}
    where $e\in\N$ is defined as
    \begin{equation}
        e\coloneqq d - 1 - \sbr{(2^{k} - 1)\;\mathrm{mod}\;d}.
    \end{equation}
    RHS of \eqref{eq:lemma5_2} becomes
    \begin{equation}
        \floor{\frac{mn}{2^{k}}}
        = \floor{\frac{2^{k} + e}{d}\cdot\frac{n}{2^{k}}}
        = \floor{\frac{n}{d} + \frac{e}{d}\cdot\frac{n}{2^{k}}}.
    \end{equation}
    To prove \eqref{eq:lemma5_2}, we only need to show that
    \begin{align}
        \floor{\frac{n}{d} + \frac{e}{d}\cdot\frac{n}{2^{k}}} = \floor{\frac{n}{d}}
         & \iff \frac{n}{d} + \frac{e}{d}\cdot\frac{n}{2^{k}} < \floor{\frac{n}{d}} + 1        \\
         & \iff \rbr{\frac{n}{d} - \floor{\frac{n}{d}}} + \frac{e}{d}\cdot\frac{n}{2^{k}} < 1.
        \label{eq:lemma5_2_target}
    \end{align}
    Note that the fractional part of $\frac{n}{d}$ is at most $\frac{d-1}{d}$.
    It is easy to see that $0 \le e < d$, so $0\le\frac{e}{d}<1$.
    Besides, by definition $n<2^N$, and we have
    \begin{align}
        \frac{n}{2^k} = \frac{n}{2^{N+p}} < \frac{2^N}{2^{N+p}}
        = \frac{1}{2^p} = \frac{1}{2^{\ceil{\log_2{d}}}}
        \le \frac{1}{2^{\log_2{d}}} = \frac{1}{d}.
    \end{align}
    Plugging these into LHS of \eqref{eq:lemma5_2_target}, we have
    \begin{equation}
        \rbr{\frac{n}{d} - \floor{\frac{n}{d}}} + \frac{e}{d}\cdot\frac{n}{2^{k}} < \frac{d-1}{d} + 1\cdot \frac{1}{d} = 1.
    \end{equation}
    Thus, we prove \eqref{eq:lemma5_2}.
\end{proof}

\begin{remark}
    In \cref{eq:lemma5_2}, we know that $m$ has exactly $N+1$ bits, and $n$ has up to $N$ bits. If $N>1$, then by \Cref{thm:width}, $mn$ has up to $2N+1$ bits.
    \label{rem:65bits}
\end{remark}

Now, we can prove \Cref{thm:main} as follows.
\begin{proof}
    By \Cref{lem:mn_version}, in order to prove \eqref{eq:main}, we only need to show that
    \begin{equation}
        \floor{\frac{mn}{2^{N+p}}}=\floor{\frac{n + q}{2^p}}.
        \label{eq:main_target}
    \end{equation}
    By \eqref{eq:m_lo}, $mn$ can be rewritten as
    \begin{equation}
        mn = \rbr{m_{lo} + 2^{N}}n
        = m_{lo}\cdot n + 2^{N}n.
        \label{eq:mn-rewrite}
    \end{equation}
    LHS of \eqref{eq:main_target} becomes
    \begin{equation}
        \floor{\frac{mn}{2^{N+p}}}
        = \floor{\frac{m_{lo}\cdot n + 2^{N}n}{2^{N+p}}}
        = \floor{\frac{\rbr{m_{lo}\cdot n} /\; 2^{N} + n}{2^{p}}}.
        \label{eq:main_target_lhs}
    \end{equation}
    Note that $\forall u,M\in\N$, it satisfies that
    \begin{equation}
        \frac{u}{2^{M}} \le \floor{\frac{u}{2^{M}}} + \frac{2^{M} - 1}{2^{M}} < \floor{\frac{u}{2^{M}}} + 1.
    \end{equation}
    So we have
    \begin{equation}
        \frac{m_{lo}\cdot n}{2^{N}} < \floor{\frac{m_{lo}\cdot n}{2^{N}}} + 1.
    \end{equation}
    Further, we can show that
    \begin{align}
        \frac{\rbr{m_{lo}\cdot n} /\; 2^{N} + n}{2^{p}}
         & < \frac{\floor{\rbr{m_{lo}\cdot n} /\; 2^{N}} + n}{2^{p}} + \frac{1}{2^{p}}                                     \\
         & \le \floor{\frac{\floor{\rbr{m_{lo}\cdot n} /\; 2^{N}} + n}{2^{p}}} + \frac{2^{p} - 1}{2^{p}} + \frac{1}{2^{p}} \\
         & = \floor{\frac{\floor{\rbr{m_{lo}\cdot n} /\; 2^{N}} + n}{2^{p}}} + 1.
    \end{align}
    Thus,
    \begin{equation}
        \floor{\frac{\rbr{m_{lo}\cdot n} /\; 2^{N} + n}{2^{p}}} = \floor{\frac{\floor{\rbr{m_{lo}\cdot n} /\; 2^{N}} + n}{2^{p}}} = \floor{\frac{q + n}{2^{p}}}.
        \label{eq:main_target_rhs}
    \end{equation}
    Combining \eqref{eq:main_target_lhs} and \eqref{eq:main_target_rhs}, we prove \eqref{eq:main_target}.
\end{proof}

\subsection{Proof of Proposition \ref{prop:n+q}}

\begin{proof}
    Combining \eqref{eq:main_target} and \eqref{eq:floor-shift}, we know
    \begin{equation}
        \rbr{mn \shr N} \shr p = mn \shr \rbr{N + p} = \floor{\frac{mn}{2^{N+p}}} = \floor{\frac{n + q}{2^p}} = (n + q) \shr p,
    \end{equation}
    which implies that $n + q$ has the same bit width as $mn \shr N$.

    \Cref{rem:65bits} shows that $mn$ takes up to $2N+1$ bits.
    So $n + q$ can take up to $N+1$ bits and overflow an $N$-bit register.
\end{proof}

\subsection{Proof of Proposition \ref{prop:faster-feasible}}

\begin{proof}
    If $n<2^{N-1}$, then $n$ takes only $N-1$ bits at most. Thus, $mn$ takes up to $2N$ bits. Therefore, $n + q$ takes up to $N$ bits, and is feasible for $N$-bit registers.
\end{proof}


\section{Proof of Corollary \ref{thm:no-overflow}}
\label{sec:proof-no-overflow}

\begin{proof}
    Given $d\not=1$, we know $p=\ceil{\log_2{d}}\ge 1$.
    RHS of \eqref{eq:main} can be rewritten as
    \begin{align}
        \floor{\frac{n+q}{2^p}}
        = \floor{\frac{n - q + 2q}{2^p}}
        = \floor{\frac{\rbr{n - q}/\;2 + q}{2^{p-1}}}.
    \end{align}
    Similar to \eqref{eq:main_target_rhs}, we can take floor of $(n-q)/\;2$ and get
    \begin{equation}
        \floor{\frac{n+q}{2^p}} = \floor{\frac{\floor{\rbr{n - q}/\;2} + q}{2^{p-1}}}.
    \end{equation}
    This proves \eqref{eq:main-no-overflow}.
\end{proof}

\subsection{Proof of Proposition \ref{prop:feasible}}

\begin{proof}
    On the one hand, $q=\floor{\rbr{m_{lo}\cdot n}\;/\;{2^{N}}}$ and $m_{lo}<2^{N}$, so $q \le n$ and $n - q$ cannot underflow.
    On the other hand, consider the addition:
    \begin{equation}
        \floor{\frac{n - q}{2}} + q \le {\frac{n - q}{2}} + q = \frac{n + q}{2}.
    \end{equation}
    Rounding down both sides we have
    \begin{equation}
        \floor{\frac{n - q}{2}} + q \le \floor{\frac{n + q}{2}}.
    \end{equation}
    Given that $n + q\in\U{N+1}$, we know $\floor{\rbr{n + q}/\;2}\in \U{N}$.
    Thus the addition cannot overflow.
\end{proof}


\section{Conclusion}

This article provides full proofs of the classic round-up variant of the fast unsigned division by constants algorithm.
Additionally, it presents a simpler variant that applies to strictly bounded dividends with rigorous proof.


\section*{Acknowledgement}

This article is inspired by Yinghan Li's CUDA implementation of the fast 32-bit unsigned division by constants.


\printbibliography

@inproceedings{granlund_division_1994,
    author = {Granlund, Torbj\"{o}rn and Montgomery, Peter L.},
    title = {Division by Invariant Integers using Multiplication},
    year = {1994},
    publisher = {Association for Computing Machinery},
    address = {New York, NY, USA},
    url = {https://doi.org/10.1145/178243.178249},
    doi = {10.1145/178243.178249},
    booktitle = {Proceedings of the ACM SIGPLAN 1994 Conference on Programming Language Design and Implementation},
    pages = {61--72},
    numpages = {12},
    location = {Orlando, Florida, USA},
    series = {PLDI '94}
}

@inproceedings{robison_division_2005,
    author = {Robison, Arch D.},
    title = {{N}-Bit Unsigned Division via {N}-Bit Multiply-Add},
    year = {2005},
    publisher = {IEEE Computer Society},
    address = {USA},
    url = {https://doi.org/10.1109/ARITH.2005.31},
    doi = {10.1109/ARITH.2005.31},
    booktitle = {Proceedings of the 17th IEEE Symposium on Computer Arithmetic},
    pages = {131--139},
    numpages = {9},
    location = {Cape Cod, MA, USA},
    series = {ARITH '05}
}

@inbook{hennessey_architecture_2019,
    author = {Hennessy, John L. and Patterson, David A.},
    title = {Pipelining: Basic and Intermediate Concepts},
    booktitle = {Computer Architecture: A Quantitative Approach},
    chapter = {C},
    year = {2019},
    isbn = {978-0-12-811905-1},
    publisher = {Elsevier},
    edition = {6th}
}

@inbook{warren_hacker_2012,
    author = {Warren, Henry S.},
    title= {Integer Division By Constants},
    booktitle = {Hacker's Delight},
    chapter = {10},
    year = {2012},
    isbn = {978-0-321-84268-8},
    publisher = {Addison-Wesley Professional},
    edition = {2nd}
}

@online{ammon_division_2010,
    author = {Ammon, Peter},
    title = {Labor of Division (Episode I)},
    date = {2010-02-15},
    url = {https://ridiculousfish.com/blog/posts/labor-of-division-episode-i.html}
}

@online{nieuwpoort_division_2020,
    author = {van Nieuwpoort, Ruben},
    title = {Division by Constant Unsigned Integers},
    date = {2020-08-28},
    url = {https://rubenvannieuwpoort.nl/posts/division-by-constant-unsigned-integers}
}


\clearpage


\begin{appendix}

    \section{Bit Width of the Product of Two Unsigned Integers}

    \begin{theorem}
        Let $x\in\U{M}$ and $y\in\U{N}$ with $M > 1$ and $N > 1$, then the bit width of the product $xy$ is \emph{tightly} bounded by $M+N$, i.e.,
        \begin{enumerate}[(1)]
            \item $\forall x\in\U{M}$ and $y\in\U{N}$, $xy\in\U{M+N}$; and
            \item $\exists x\in\U{M}$ and $y\in\U{N}$, such that $xy\not\in\U{M+N-1}$.
        \end{enumerate}
        \label{thm:width}
    \end{theorem}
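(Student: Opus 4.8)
The plan is to prove the two parts of \Cref{thm:width} separately, since they are logically independent.

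For part (1), the upper bound, I would argue as follows. If $x \in \U{M}$ and $y \in \U{N}$, then $x \le 2^M - 1$ and $y \le 2^N - 1$. Hence $xy \le (2^M - 1)(2^N - 1) = 2^{M+N} - 2^M - 2^N + 1$. Since $M > 1$ and $N > 1$ (in fact $M,N \ge 1$ suffices here), we have $2^M + 2^N \ge 1 + 1 = 2 > 1$, so $xy \le 2^{M+N} - 1 < 2^{M+N}$, and since $x, y \ge 0$ we have $xy \ge 0$. Therefore $xy \in \U{M+N}$. This step is entirely routine; the hypotheses $M > 1$, $N > 1$ are comfortably more than needed.

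For part (2), tightness, I need to exhibit a witness $x \in \U{M}$, $y \in \U{N}$ with $xy \ge 2^{M+N-1}$. The natural candidate is $x = 2^{M-1}$ and $y = 2^{N-1}$, which lie in $\U{M}$ and $\U{N}$ respectively (here $M, N \ge 1$ is enough for these to be well-defined nonnegative integers, and they are strictly below $2^M$, $2^N$). Then $xy = 2^{M-1} \cdot 2^{N-1} = 2^{M+N-2}$, which is unfortunately only in $\U{M+N-1}$, so this choice is too small. A better witness is $x = 2^M - 1$ and $y = 2^N - 1$: their product is $2^{M+N} - 2^M - 2^N + 1$, and I would check that this is $\ge 2^{M+N-1}$, i.e. that $2^{M+N-1} \ge 2^M + 2^N - 1$. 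This inequality is where the hypothesis $M > 1$ and $N > 1$ actually gets used: for $M = N = 2$ it reads $8 \ge 7$, which holds, and for larger $M, N$ the left side grows much faster. I would prove $2^{M+N-1} \ge 2^M + 2^N - 1$ for all integers $M, N \ge 2$ by a short direct argument — for instance, writing $2^{M+N-1} = 2^{M-1} \cdot 2^N \ge 2 \cdot 2^N > 2^N$ and symmetrically $2^{M+N-1} = 2^{N-1} \cdot 2^M \ge 2^M$, then combining $2^{M+N-1} + 2^{M+N-1} \ge 2^M + 2^N$ with the fact that a single copy of $2^{M+N-1}$ already exceeds $2^M + 2^N - 1$ once $M, N \ge 2$ — or simply by noting $2^{M+N-1} - 2^M - 2^N + 1 = (2^{M-1}-1)(2^{N-1}-1) + 2^{N-1} + 2^{M-1} - 2^{M} - 2^{N} + \dots$; cleanest is probably to just verify $2^{M+N-1} \ge 2^M + 2^N - 1 \iff 2^{M+N} \ge 2^{M+1} + 2^{N+1} - 2$ and bound crudely. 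Concluding, $xy \ge 2^{M+N-1}$ means $xy \notin \U{M+N-1}$, establishing tightness.

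The main obstacle — such as it is — is purely bookkeeping in part (2): choosing the witness correctly (the "obvious" $2^{M-1} \cdot 2^{N-1}$ fails, so one must use the maximal elements $2^M - 1$ and $2^N - 1$) and then discharging the elementary inequality $2^{M+N-1} \ge 2^M + 2^N - 1$, which is exactly the place the strict hypotheses $M > 1$, $N > 1$ are needed (the statement is false for $M = 1$, where the product of two $1$-bit numbers is at most $1 \in \U{1}$, not tightly filling $\U{2}$). There is no deep idea here; it is a matter of getting the constants right.
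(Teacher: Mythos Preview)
Your proposal is correct and follows the same route as the paper: for (1) both bound $(2^M-1)(2^N-1)<2^{M+N}$, and for (2) both take the witness $x=2^M-1$, $y=2^N-1$ and reduce to verifying $2^{M+N-1}\ge 2^M+2^N-1$. The only difference is cosmetic --- the paper discharges that inequality by dividing through by $2^{M+N}$ and bounding $\tfrac{1}{2^M}+\tfrac{1}{2^N}\le\tfrac12$ for $M,N\ge 2$, whereas you sketch a direct integer bound; just pick one of your three routes (the first one works cleanly: from $2^{M+N-1}\ge 2^{M+1}$ and $2^{M+N-1}\ge 2^{N+1}$, add and divide by $2$) and write it out in full.
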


    \begin{proof}
        Let $\Bar{x}\coloneqq \max\U{M} =2^M - 1$, and $\Bar{y} \coloneqq \max\U{N} = 2^N - 1$.

        First, we will prove proposition (1). $\forall x\in\U{M}$ and $y\in\U{N}$, we have
        \begin{equation}
            xy\le\Bar{x}\Bar{y} = \rbr{2^M-1}\rbr{2^N-1}=2^{M+N} - \rbr{2^M+2^N} + 1 < 2^{M+N},
        \end{equation}
        so by definition $xy\in\U{M+N}$.

        Next, we will prove proposition (2). By definition, $\max\U{M+N-1}=2^{M+N-1}-1$, and we have
        \begin{align}
            \Bar{x}\Bar{y} - \rbr{2^{M+N-1} - 1}
            = 2^{M+N} - \rbr{2^M+2^N} + 1 - {2^{M+N-1}} + 1
            = 2^{M+N-1} - \rbr{2^M+2^N} + 2.
        \end{align}
        To show $\Bar{x}\Bar{y}\not\in\U{M+N-1}$, we only need to show that
        \begin{align}
            2^{M+N-1} - \rbr{2^M+2^N} + 2 > 0
            \iff \frac{1}{2}-\rbr{\frac{1}{2^N}+\frac{1}{2^M}} + \frac{1}{2^{M+N-1}} > 0.
        \end{align}
        Given that $M,N\in\N$, $M>1$, and $N>1$, we have
        \begin{align}
            \frac{1}{2^{M+N-1}}           & > 0,                                             \\
            \frac{1}{2^N} + \frac{1}{2^M} & \le \frac{1}{2^2} + \frac{1}{2^2} = \frac{1}{2}.
        \end{align}
        These show that
        \begin{equation}
            \frac{1}{2}-\rbr{\frac{1}{2^N}+\frac{1}{2^M}} + \frac{1}{2^{M+N-1}} > 0.
        \end{equation}
        Thus, $\exists x\in\U{M}$ and $y\in\U{N}$, such that $xy\not\in\U{M+N-1}$.

        In conclusion, the bit width of $xy$ is tightly bounded by $M+N$.
    \end{proof}

\end{appendix}


\end{document}